\newcommand{\Suff}{\textit{Suff}}
\newcommand{\Pref}{\textit{Pref}}
\newcommand{\Fact}{\textit{Fact}}
\newcommand{\PL}{P\!L}
\def\P{p}
\newcommand{\fmax}{F}
\newcommand{\pr}{P}
\newcommand{\pos}{\mathit{pos}}
\newcommand{\fmin}{f}
\newcommand{\select}{{\textit{select}}}
\newcommand{\rank}{{\textit{rank}}}
\renewcommand{\epsilon}{\varepsilon}
\newcommand{\PNF}{\mathrm{PNF}}
\spnewtheorem{observation}{Observation}{\bfseries}{\itshape}
\begin{document}

\title{On Prefix Normal Words}

\author{Gabriele Fici \inst{1} \and Zsuzsanna Lipt\'ak \inst{2}}

\institute{I3S, CNRS \& Universit\'e de Nice-Sophia Antipolis, France \\ \email{fici@i3s.unice.fr} \and AG Genominformatik, Technische Fakult\"at, Bielefeld University, Germany\\ \email{zsuzsa@cebitec.uni-bielefeld.de}}

\maketitle

\begin{abstract}
We present a new class of binary words: the prefix normal words. They are defined by the property that for any given length $k$, no factor of length $k$ has more $a$'s than the prefix of the same length. These words arise in the context of indexing for jumbled pattern matching (a.k.a.\ permutation matching or Parikh vector matching), where the aim is to decide whether a string has a factor with a given multiplicity of characters, i.e., with a given Parikh vector. Using prefix normal words, we give the first non-trivial characterization of binary words having the same set of Parikh vectors of factors. We prove that the language of prefix normal words is not context-free and is strictly contained in the language of pre-necklaces, which are prefixes of powers of Lyndon words. We discuss further properties and state open problems.
\end{abstract}

\noindent {\bf Key words:} Parikh vectors, pre-necklaces, Lyndon words, context-free languages, jumbled pattern matching, permutation matching, non-standard pattern matching, indexing.

\section{Introduction}

Given a finite word $w$ over a finite ordered alphabet $\Sigma$, the Parikh vector of $w$ is defined as the vector of multiplicities of the characters in $w$. In recent years, Parikh vectors have been increasingly studied, in particular Parikh vectors of factors (substrings) of words, motivated by applications in computational biology, e.g.\ mass spectrometry~\cite{EresLP04,CieliebakELSW04,Boecker07,ADMOP10}. Among the new problems introduced in this context is that of {\em jumbled pattern matching} (a.k.a.\ permutation matching or Parikh vector matching), whose decision variant is the task of deciding whether a given word $w$ (the text) has a factor with a given Parikh vector (the pattern). In~\cite{CiFiLi09}, Cicalese {\em et al.} showed that in order to answer decision queries for binary words, it suffices to know, for each $k$, the maximum and minimum number of $a$'s in a factor of length $k$. Thus it is possible to create an index of size $O(n)$ of a text of length $n$, which contains, for every $k$, the maximum and minimum number of $a$'s in a factor of length $k$, and which allows answering decision queries in constant time.

In this paper, we introduce a new class of binary words, {\em prefix normal words}. They are defined by the property that for any given length $k$, no factor of length $k$ appearing in the word has more $a$'s than the prefix of the word of the same length. For example, the word $aabbaaba$ is not a prefix normal word, because the factor $aaba$ has more $a$'s then the prefix of the same length, $aabb$. 

We show that for every binary word $w$, there is a prefix normal word $w'$ such that,  for every $0\leq k \leq |w|$, the maximum number of $a$'s in a factor of length $k$ coincide for $w$ and $w'$ (Theorem \ref{thm:pnf}). We refer to $w'$ as the {\em prefix normal form} of $w$ (with respect to $a$).

Given a word $w$, a {\em factor Parikh vector} of $w$ is the Parikh vector of a factor of $w$. An interesting characterization of words with the same multi-set of factor Parikh vectors was given recently by Acharya {\em et al.}~\cite{ADMOP10}.
In this paper, we give the first non-trivial characterization of the {\em set} of factor Parikh vectors, by showing that two words have the same set of factor Parikh vectors if and only if their prefix normal forms, with respect to $a$ and $b$, both coincide (Theorem \ref{thm:charparset}). 

We explore the language of prefix normal words and its connection to other known languages. Among other things, we show that this language is not context-free (Theorem \ref{teor:CF}) by adapting a proof of Berstel and Boasson~\cite{BeBo97} for Lyndon words, and that it is properly included in the language of pre-necklaces, the prefixes of powers of Lyndon words (Theorem \ref{theor:PL}). 
We close with a number of open problems.

\medskip

\noindent
{\bf Connection to Indexed Jumbled Pattern Matching.} The current fastest algorithms for computing an index for the binary jumbled pattern matching problem were concurrently and independently developed by Burcsi {\em et al.}~\cite{BuCiFiLi10a} and Moosa and Rahman~\cite{MoRa10}. In order to compute an index of a text of length $n$, both used a reduction to min-plus convolution, for which the current best algorithms have a runtime of $O(n^2/\log n)$. Very recently, Moosa and Rahman~\cite{MoRaJDA} introduced an algorithm with runtime $O(n^2/\log^2 n)$ which uses word-RAM operations. Our characterization of the set of factor Parikh vectors in terms of prefix normal forms yields a new approach to the problem of indexed jumbled pattern matching: Given the prefix normal forms of a word $w$, the index for the jumbled pattern matching problem can be computed in linear time $O(n)$. This implies that any algorithm for computing the prefix normal form with runtime $o(n^2/\log^2 n)$ will result in an improvement for the indexing problem for binary jumbled pattern matching. Since on the other hand,  the prefix normal forms can be computed from the index in $O(n)$ time, we also have that a lower bound for the computation of the prefix normal form would  yield a lower bound for the binary jumbled pattern matching problem.

\section{The Prefix Normal Form}\label{sec:pnf}

We fix the ordered alphabet $\Sigma=\{a,b\}$, with $a<b$. A word $w=w_1\cdots w_n$ over $\Sigma$ is a finite sequence of elements from $\Sigma$. Its length $n$ is denoted by $|w|$. We denote the empty word by $\epsilon$. For any $1\leq i\leq |w|$, the $i$-th symbol of a word $w$ is denoted by $w_{i}$. As is standard, we denote by $\Sigma^n$ the words over $\Sigma$ of length $n$, and by $\Sigma^{*} = \cup_{n\geq 0} \Sigma^n$ the set of finite words  over $\Sigma$. Let $w\in \Sigma^{*}$. If $w=uv$ for some $u,v\in\Sigma^{*}$, we say that $u$ is a \emph{prefix} of $w$ and $v$ is a \emph{suffix} of $w$. A \emph{factor} of $w$ is a prefix of a suffix of $w$ (or, equivalently, a suffix of a prefix). We denote by $\Pref(w)$, $\Suff(w)$, $\Fact(w)$  the set of prefixes, suffixes, and factors of the word $w$, respectively.

For a letter $a\in \Sigma$, we denote by $|w|_{a}$ the number of occurrences of $a$ in the word $w$. The {\em Parikh vector} of a word $w$ over $\Sigma$ is defined as $\P(w)=(|w|_{a},|w|_{b})$. The {\em Parikh set} of $w$ is $\Pi(w)=\{\P(v) \mid  v\in \Fact(w)\}$, the set of Parikh vectors of the factors of $w$. 

Finally, given a word $w$ over $\Sigma$ and a letter $a\in \Sigma$, we denote by $\pr_{a}(w,i) = |w_1\cdots w_i|_a$, the number of $a$'s in the prefix of length $i$ of $w$, and by $\pos_a(w,i)$ the position of the $i$'th $a$ in $w$, i.e., $\pos_a(w,i) = \min\{ k : |w_1\cdots w_k|_a = i\}$. When $w$ is clear from the context, we also write $\pr_a(i)$ and $\pos_a(i)$. Note that in the context of succint indexing, these functions are frequently called $\rank$ and $\select$, cf.~\cite{NavMaek07}: We have $ \pr_a(w,i) = \rank_a(w,i)$ and $ \pos_a(w,i) = \select_a(w,i)$. 

\begin{definition}
Let $w\in \Sigma^*$. We define, for each $0\leq k\leq |w|$, 
\begin{eqnarray*}
\fmax_a(w,k) =  \max\{|v|_a \mid v\in \Fact(w)\cap \Sigma^k\},
\end{eqnarray*}

\noindent the maximum number of $a$'s in a factor of $w$ of length $k$. 
When no confusion can arise, we also write $F_a(k)$ for $F_a(w,k)$. 
The function $\fmax_b(w)$ is defined analogously by taking $b$ in place of $a$. 
\end{definition}

\begin{example}
 Take $w = ababbaabaabbbaaabbab$. In Table~\ref{tab:val}, we give the values of $\fmax_{a}$ and $\fmax_{b}$ for $w$.

\begin{table}
\centering  
\begin{small}
\begin{raggedright}
\begin{tabular}{c *{30}{@{\hspace{2.1mm}}l}}
 $k$  &  0\hspace{1ex}  & 1\hspace{1ex} & 2\hspace{1ex} & 3\hspace{1ex} & 4\hspace{1ex} & 5\hspace{1ex} & 6\hspace{1ex} & 7\hspace{1ex} &
8\hspace{1ex} & 9\hspace{1ex} & 10 & 11 & 12 & 13 & 14 & 15 & 16 & 17 & 18 & 19 & 20 \\
\hline \rule[-6pt]{0pt}{22pt}
$ \fmax_{a}$ &  0    & 1& 2& 3& 3& 4& 4& 4& 5& 5& 6& 7& 7& 7& 8& 8& 9& 9& 9& 10& 10  \\
\hline \rule[-6pt]{0pt}{22pt}
$ \fmax_{b}$ &  0    & 1& 2& 3& 3& 3& 4& 4& 5& 5& 6& 6& 7& 7& 7& 8& 8& 9& 9& 10& 10\\
\hline \\
\end{tabular}
\end{raggedright}\caption{\label{tab:val}The sequences $ \fmax_{a}$ and $ \fmax_{b}$ for the word $w = ababbaabaabbbaaabbab$.}
\end{small}
\end{table}
\end{example}

\begin{lemma}\label{lemma:Fa}
Let $w\in \Sigma^*$. The function $F_a(\cdot) = F_a(w,\cdot)$  has the following property: 

\[ F_a(j) - F_a(i) \leq F_a(j-i) \qquad \text{for all } 0\leq i \leq j \leq |w|. \]
\end{lemma}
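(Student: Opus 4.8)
The plan is to read the required inequality as the sub-additivity $F_a(j) \le F_a(i) + F_a(j-i)$ and to obtain it directly from the definition of $F_a$ by splitting an optimal factor. First I would fix $0\le i\le j\le |w|$ and choose a factor $v\in \Fact(w)\cap\Sigma^j$ realizing the maximum, i.e.\ with $|v|_a = F_a(j)$; such a $v$ exists because $\Fact(w)$ is a finite set.

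Next I would cut $v$ at position $i$, writing $v = xy$ with $|x| = i$ and $|y| = j-i$. Since $x$ is a prefix of the factor $v$ of $w$, it is itself a factor of $w$, and likewise $y$, being a suffix of $v$, is a factor of $w$; this is just transitivity of the factor relation, and it simultaneously takes care of the boundary cases $i=0$ and $i=j$, in which one of $x,y$ equals $\epsilon$ and contributes $0 = F_a(0)$. By the definition of $F_a$ applied to the lengths $i$ and $j-i$ we then have $|x|_a\le F_a(i)$ and $|y|_a\le F_a(j-i)$, whence
\[
F_a(j) = |v|_a = |x|_a + |y|_a \le F_a(i) + F_a(j-i),
\]
which rearranges to the claimed bound $F_a(j) - F_a(i) \le F_a(j-i)$.

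There is no genuine obstacle here: the argument is a one-line splitting once the two bookkeeping points are noted, namely (i) that a maximizing factor of length $j$ exists, and (ii) that a factor of a factor of $w$ is again a factor of $w$, so that $x$ and $y$ are legitimate arguments of $F_a$ and the boundary values $F_a(0)=0$ fit seamlessly.
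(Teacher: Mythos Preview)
Your proof is correct and follows essentially the same approach as the paper: pick a factor $v$ of length $j$ realizing $F_a(j)$, split it at position $i$, and bound the pieces by $F_a(i)$ and $F_a(j-i)$. The only cosmetic difference is that the paper phrases the argument by contradiction and focuses on the length-$(j-i)$ piece, whereas you give the direct sub-additivity inequality using both pieces.
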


\begin{proof}
Assume otherwise. Then there are indices $i\leq j$ such that $F_a(j) - F_a(i) > F_a(j-i)$. Let $v\in \Fact(w)$ be a word that realizes $F_a(j)$, i.e.,\ $|v| = j$ and $|v|_a = F_a(j)$. Let us write $v = v_1v_2\cdots v_j$. Then for the word $u = v_{i+1} \cdots v_j$, we have $|u|_a = |v|_a - |v_1\cdots v_i|_a \geq F_a(j) - F_a(i) > F_a(j-i)$, in contradiction to the definition of $F_a$, since  $|u| = j-i$. \qed 
\end{proof}

We are now ready to show that for every word $w$ there is a word $w'$ which realizes the function $\fmax_a(w)$ as its prefix function $\pr_a(w')$. 

\begin{theorem}\label{thm:pnf}
Let $w\in \Sigma^*$. Then there exists a unique word $w'$ s.t.\ for all $0\leq k\leq |w|$, $\fmax_a(w,k) = \fmax_a(w',k) = \pr_a(w',k)$. We call this word $w'$ the {\em prefix normal form} of $w$ (with respect to $a$), and denote it $\PNF_a(w)$. Analogously, there exists a unique word $w''$,  such that for all $0\leq k \leq |w|$, $\fmax_b(w,k) = \fmax_b(w'',k) = \pr_b(w'',k)$, the prefix normal form of $w$ with respect to $b$, denoted $\PNF_b(w)$.
\end{theorem}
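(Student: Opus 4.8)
The plan is to \emph{construct} $w'$ directly from the integer sequence $(\fmax_a(w,k))_{0\le k\le |w|}$ and then verify the three required equalities. First I would record two elementary facts about $F_a := \fmax_a(w,\cdot)$: namely $F_a(0)=0$, and $F_a(k)-F_a(k-1)\in\{0,1\}$ for every $1\le k\le |w|$. The upper bound $F_a(k)-F_a(k-1)\le F_a(1)\le 1$ is the instance $i=k-1$, $j=k$ of Lemma~\ref{lemma:Fa}; the lower bound $F_a(k)\ge F_a(k-1)$ follows because a factor of length $k-1$ attaining $F_a(k-1)$ is a proper factor of $w$, hence extends on at least one side to a factor of length $k$ without losing any $a$'s (the boundary case $k=|w|$ being handled the same way, the extension necessarily being $w$ itself).

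Given these facts, define $w'\in\Sigma^{|w|}$ letter by letter by setting $w'_k=a$ if $F_a(k)=F_a(k-1)+1$ and $w'_k=b$ otherwise. Telescoping then gives $\pr_a(w',k)=\sum_{i=1}^{k}\bigl(F_a(i)-F_a(i-1)\bigr)=F_a(k)$ for every $k$, which establishes one of the two equalities and also yields uniqueness, since a word is determined by its prefix-rank function $\pr_a(\cdot)$ (if $w''$ satisfies the same conditions then $\pr_a(w'',k)=F_a(k)=\pr_a(w',k)$ for all $k$, so $w''_k=w'_k$ for all $k$). It remains to check $\fmax_a(w',k)=F_a(k)$. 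The inequality $\fmax_a(w',k)\ge \pr_a(w',k)=F_a(k)$ is immediate, as the length-$k$ prefix of $w'$ is a factor of $w'$. For the reverse inequality, take any factor $v=w'_{i+1}\cdots w'_{j}$ of $w'$ with $j-i=k$; then $|v|_a=\pr_a(w',j)-\pr_a(w',i)=F_a(j)-F_a(i)\le F_a(j-i)=F_a(k)$, where the inequality is exactly Lemma~\ref{lemma:Fa}. Hence $\fmax_a(w',k)\le F_a(k)$, and the three equalities hold. The statement for $b$ follows from the symmetric construction applied to $\fmax_b$.

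The construction is short, and I do not expect a serious obstacle. The one place where genuine content enters is the reverse inequality $\fmax_a(w',k)\le F_a(k)$ — that is, checking that the word we built does not accidentally contain a factor with more $a$'s than its own length-$k$ prefix — and this is precisely what the subadditivity-type inequality of Lemma~\ref{lemma:Fa} delivers. The only points requiring care are therefore stating the two monotonicity facts about $F_a$ correctly and checking the boundary cases $k=0$ and $k=|w|$.
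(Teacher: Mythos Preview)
Your proof is correct and follows essentially the same approach as the paper's: construct $w'$ from the increment sequence $F_a(k)-F_a(k-1)$, then invoke Lemma~\ref{lemma:Fa} on $w$ to verify that no factor of $w'$ beats its prefix of the same length. Your write-up is in fact slightly more complete than the paper's, since you explicitly justify the step $F_a(k)-F_a(k-1)\in\{0,1\}$ (which the paper calls ``easy to see'') and you spell out the uniqueness argument, which the paper's proof omits.
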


\begin{proof}
We only give the proof for $w'$. The construction of $w''$ is analogous. 
It is easy to see that for $1\le k\le |w|$, one has either $ \fmax_a(w,k)= \fmax_a(w,k-1)$ or $ \fmax_a(w,k)=1+ \fmax_a(w,k-1)$.
Now define the word $w'$ by
 
 $$w'_k = \begin{cases}
a & \quad \text{if $ \fmax_a(w,k)=1+ \fmax_a(w,k-1)$}\\
b & \quad \text{if $ \fmax_a(w,k)= \fmax_a(w,k-1)$}
\end{cases}$$
for every $1\le k\le |w|$.

By construction, we have $\pr_{a}(w',k) = \fmax_{a}(w,k)$ for every $1\le k\le |w|$. We still need to show that $\pr_a(w',k) = \fmax_a(w',k)$ for all $k$, i.e.,\ that $w'$ is in prefix normal form. By definition, $\pr_a(w',k) \leq \fmax_a(w',k)$ for all $k$. Now let $v\in Fact(w')$, $|v|=k$, and $v = w_{i+1}\cdots w_j$. Then $|v|_a = \pr_a(w',j) - \pr_a(w',i) = \fmax_a(w,j) - \fmax_a(w,i) \leq \fmax_a(w,j-i) = \pr_a(w',j-i) = \pr_a(w',k)$, where the inequality holds by Lemma~\ref{lemma:Fa}. We have thus proved that $ \fmax_a(w',k) \leq \pr_a(w',k)$, and we are done.\qed
\end{proof}

\begin{example}
 Let  $w = ababbaabaabbbaaabbab$.  The prefix normal forms of $w$ are the words $$\PNF_{a}(w)=aaababbabaabbababbab,$$ and $$\PNF_{b}(w)=bbbaababababaabababa.$$
 \end{example}

The operators $\PNF_a$ and $\PNF_b$ are idempotent operators, that is, if $u = \PNF_x(w)$ then $\PNF_x(u) = u$, for any $x\in \Sigma$. Also, for any $w\in \Sigma^{*}$ and $x\in \Sigma$, it holds that $\PNF_{x}(w)=\PNF_{x}(\tilde{w})$, where $\tilde{w}=w_{n}w_{n-1}\cdots w_{1}$ is the reversal of $w$. 

The prefix normal forms of a word allow one to determine the Parikh vectors of the factors of the word, as we will show in Theorem \ref{thm:charparset}. We first recall the following lemma from~\cite{CiFiLi09}, where we say that a Parikh vector $q$ {\em occurs} in a word $w$ if $w$ has a factor $v$ with $p(v) = q$.

\begin{lemma}[Interval Lemma, Cicalese {et al.}~\cite{CiFiLi09}]
\label{lemma:continuous}
Let $w\in \Sigma^*$. Fix $1\leq k\leq |w|$. 
If the Parikh vectors $(x_1,k-x_1)$ and $(x_2,k-x_2)$ both occur in $w$, then so does $(y,k-y)$ for any $x_1\leq y \leq x_2$. 
\end{lemma}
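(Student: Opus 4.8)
The plan is to prove the Interval Lemma by a classical "continuity" argument on the set of positions where a length-$k$ factor can start, using the fact that sliding a window of length $k$ one position to the right changes the number of $a$'s by exactly $-1$, $0$, or $+1$. First I would fix $k$ and, for $0 \leq s \leq |w|-k$, let $f(s) = |w_{s+1}\cdots w_{s+k}|_a$ be the number of $a$'s in the length-$k$ factor starting at position $s+1$. The key observation is that $f(s+1) - f(s) = |w_{s+k+1}|_a - |w_{s+1}|_a \in \{-1,0,1\}$, since passing from one window to the next drops the symbol $w_{s+1}$ and adds the symbol $w_{s+k+1}$, each of which contributes $0$ or $1$ to the $a$-count. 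So $f$ is a function on an integer interval whose consecutive values differ by at most $1$ in absolute value; hence $f$ takes every integer value between any two of its values (a discrete intermediate value property, proved by a trivial induction on the number of steps between the two positions).

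Then I would translate this back into the statement. Suppose $(x_1, k-x_1)$ and $(x_2, k-x_2)$ both occur in $w$, say the first is realized by the window starting at $s_1$ and the second by the window starting at $s_2$, so $f(s_1) = x_1$ and $f(s_2) = x_2$. Without loss of generality $s_1 \leq s_2$. By the discrete intermediate value property applied to $f$ on the interval $[s_1, s_2]$, for any $y$ with $x_1 \leq y \leq x_2$ (and symmetrically if $x_2 < x_1$) there is an $s$ with $s_1 \leq s \leq s_2$ and $f(s) = y$; the length-$k$ factor starting at position $s+1$ then has Parikh vector $(y, k-y)$, which is exactly what we wanted.

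I do not expect a genuine obstacle here: the entire content is the one-step estimate $|f(s+1)-f(s)| \leq 1$ together with the elementary observation that a discrete function with unit steps cannot skip values. The only points requiring a little care are bookkeeping of indices (making sure the windows stay inside $w$, i.e.\ $0 \leq s \leq |w|-k$, which is guaranteed because both $s_1$ and $s_2$ lie in that range and we only use $s$ between them) and stating the monotone-in-absolute-value step property correctly rather than assuming $f$ itself is monotone. Since this lemma is quoted from~\cite{CiFiLi09}, one could alternatively just cite it, but the self-contained proof above is short enough to include in full.
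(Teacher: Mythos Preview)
Your proposal is correct and is precisely the ``simple sliding window argument'' the paper alludes to in lieu of a full proof: define the $a$-count of the length-$k$ window as a function of its starting position, observe that it changes by at most $1$ when the window shifts by one, and apply the discrete intermediate value property. There is nothing to add.
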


The lemma can be proved with a simple sliding window argument. 

\begin{theorem}\label{thm:charparset}
Let $w,w'$ be words over $\Sigma$. Then $\Pi(w) = \Pi(w')$ if and only if $\PNF_a(w) = \PNF_a(w')$ and $\PNF_b(w) = \PNF_b(w')$. 
\end{theorem}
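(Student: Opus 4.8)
The plan is to prove both directions by relating the Parikh set $\Pi(w)$ to the pair of functions $(\fmax_a(w,\cdot), \fmax_b(w,\cdot))$, and then invoking Theorem~\ref{thm:pnf} to pass between these functions and the prefix normal forms. The key observation is that, by the Interval Lemma (Lemma~\ref{lemma:continuous}), for each fixed length $k$ the Parikh vectors of length $k$ occurring in $w$ form a contiguous block $\{(x, k-x) : \fmin_a(w,k) \leq x \leq \fmax_a(w,k)\}$, where $\fmin_a(w,k)$ is the minimum number of $a$'s in a factor of length $k$. Moreover $\fmin_a(w,k) = k - \fmax_b(w,k)$, since minimizing the number of $a$'s in a length-$k$ factor is the same as maximizing the number of $b$'s. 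Hence $\Pi(w)$ is completely determined by the two sequences $\fmax_a(w,\cdot)$ and $\fmax_b(w,\cdot)$, and conversely these two sequences are recovered from $\Pi(w)$ as $\fmax_a(w,k) = \max\{x : (x,k-x)\in\Pi(w)\}$ and $\fmax_b(w,k) = \max\{k-x : (x,k-x)\in\Pi(w)\}$. (A small caveat: if $|w| \neq |w'|$ then trivially $\Pi(w)\neq\Pi(w')$ and also the PNFs have different lengths, so we may assume $|w| = |w'| = n$.)

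With this established, the proof is short. For the ``only if'' direction: if $\Pi(w) = \Pi(w')$, then by the recovery formulas above $\fmax_a(w,k) = \fmax_a(w',k)$ and $\fmax_b(w,k) = \fmax_b(w',k)$ for all $0\leq k\leq n$; by Theorem~\ref{thm:pnf} the prefix normal form is the unique word whose prefix $a$-counts equal $\fmax_a$, so $\PNF_a(w) = \PNF_a(w')$, and symmetrically $\PNF_b(w) = \PNF_b(w')$. For the ``if'' direction: if $\PNF_a(w) = \PNF_a(w')$, then reading off prefix $a$-counts gives $\fmax_a(w,k) = \pr_a(\PNF_a(w),k) = \pr_a(\PNF_a(w'),k) = \fmax_a(w',k)$ for all $k$, and similarly $\PNF_b(w) = \PNF_b(w')$ yields $\fmax_b(w,\cdot) = \fmax_b(w',\cdot)$; then the characterization of $\Pi$ in terms of these two sequences gives $\Pi(w) = \Pi(w')$.

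The main obstacle — really the only non-formal point — is establishing the claim that $\Pi(w)$ is determined by $\fmax_a(w,\cdot)$ together with $\fmax_b(w,\cdot)$, i.e.\ proving that the set of length-$k$ Parikh vectors occurring in $w$ is exactly the full integer interval between $(\,k-\fmax_b(w,k),\,\fmax_b(w,k)\,)$ and $(\,\fmax_a(w,k),\,k-\fmax_a(w,k)\,)$. The endpoints are attained by definition of $\fmax_a$ and $\fmax_b$, and the Interval Lemma fills in everything between them; this is the one place where Lemma~\ref{lemma:continuous} is used, and it is what makes the ``set'' of factor Parikh vectors (as opposed to the multiset) so rigidly structured. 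Once this lemma-level fact is in hand, everything else is bookkeeping with the uniqueness statement of Theorem~\ref{thm:pnf}.
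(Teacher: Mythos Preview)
Your proof is correct and follows essentially the same route as the paper: use the Interval Lemma to show that $\Pi(w)$ is determined by (and determines) the pair of functions $\fmax_a(w,\cdot)$ and $\fmax_b(w,\cdot)$ via the identity $\fmin_a(w,k)=k-\fmax_b(w,k)$, and then pass to the prefix normal forms using Theorem~\ref{thm:pnf}. Your write-up is a bit more explicit about both directions and about the equal-length assumption, but the argument is the same.
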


\begin{proof}
Let $\fmin_a(w,k)$ denote the minimum number of $a$'s in a factor of $w$ of
length $k$.
As a direct consequence of Lemma~\ref{lemma:continuous}, we have that
for a Parikh vector $q=(x,y)$, $q\in \Pi(w)$ if and only if
$\fmin_a(w,x+y) \leq x \leq \fmax_a(w,x+y)$. Thus  for two words
$w,w'$, we have $\Pi(w) = \Pi(w')$ if and only if $\fmax_a(w) =
\fmax_a(w')$ and $\fmin_a(w) = \fmin_a(w')$.  It is easy to see that
for all $k$, $\fmin_a(w,k) = k - \fmax_b(w,k)$, thus the last
statement is equivalent to $\fmax_a(w) = \fmax_a(w')$ and $\fmax_b(w)
= \fmax_b(w')$. This holds if and only if $\PNF_a(w) = \PNF_a(w')$ and
$\PNF_b(w) = \PNF_b(w')$, and the claim is proved. \qed
\end{proof}

There is a simple geometrical construction for computing the prefix normal forms of a word $w$, and hence, by Theorem \ref{thm:charparset}, the set $\Pi(w)$ of Parikh vectors occurring in $w$. An example of this construction is given in Fig. \ref{fig:esempio}.

Draw in the Euclidean plane the word $w$ by linking, for every $0\le i \le |w|$, the points $(i,j)$, where $j$ is the difference between the number of $a$'s and the number of $b$'s in the prefix of $w$ of length $i$. That is, draw $w$ by representing each letter $a$ by an upper unit diagonal and each letter $b$ by a lower unit diagonal, starting from the origin $(0,0)$.

Then draw all the suffixes of $w$ in the same way, always starting from the origin. The region of the plane so delineated is in fact the region of points $(x,y)$ such that there exists a factor $v$ of $w$ such that $x=|v|=|v|_{a}+|v|_{b}$ and $y=|v|_{a}-|v|_{b}$. Hence $(|v|_{a},|v|_{b})=(\frac{x+y}{2},\frac{x-y}{2})=\P(v)$ belongs to $\Pi(w)$. 

The region is connected by Lemma \ref{lemma:continuous}, in the sense that all internal points belong to $\Pi(w)$. The prefix normal forms $\PNF_{a}(w)$ and $\PNF_{b}(w)$ are obtained by connecting the upper and the lower points of the region, respectively.

\begin{figure}
\begin{center}
  \includegraphics[height=35mm]{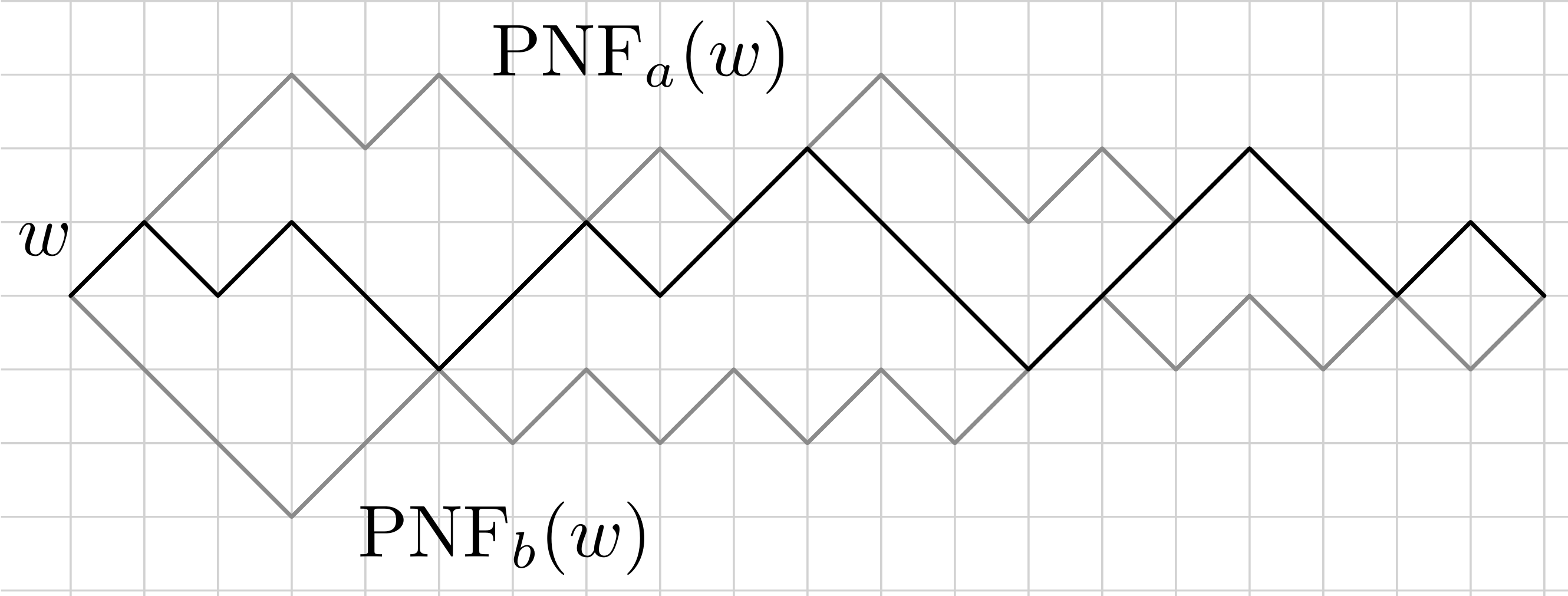}
  \caption{The word $w=ababbaabaabbbaaabbab$, its prefix normal forms $\PNF_{a}(w)=aaababbabaabbababbab$ and $\PNF_{b}(w)=bbbaababababaabababa$, and the region delineated by $\Pi(w)$, the Parikh set of $w$.}\label{fig:esempio}
\end{center}
\end{figure}

\section{The Language of Prefix Normal Words}\label{sec:La}

In this section, we take a closer look at those words which are in prefix normal form, which we refer to as \emph{prefix normal words}. For simplicity of exposition, from now on we only refer to prefix normality with respect to the letter $a$.

\begin{definition}
A \emph{prefix normal word} is a word $w\in \Sigma^*$ such that for every $0\le k\le |w|$, $\fmax_{a}(w,k)=\pr_{a}(w,k)$. That is, a word such that $w=\PNF_{a}(w)$. 
We denote by $L_a\subset \Sigma^*$ the language of prefix normal words.
\end{definition}

The following proposition gives some characterizations of prefix normal words. Recall that $\pr_a(w,i)  = |w_1\cdots w_i|_a$ is the number of $a$'s in the prefix of length $i$, and $\pos_a(w,i) = \min\{ k : |w_1\cdots w_k|_a = i\}$ is the position of the $i$'th $a$. When no confusion can arise, we write simply $\pr_a(i)$ and $ \pos_a(i)$. In particular, we have $\pr_{a}( \pos_a(i))=i$ and $\pos_a(P_a(i)) \leq i$.

\begin{proposition}\label{prop:char}
Let $w\in \Sigma^{*}$. The following properties are equivalent:

\begin{enumerate}
\item $w$ is a prefix normal word;
\item $\forall i,j$ where $0\le i\le j \le |w|$, we have $ \pr_{a}(j) -  \pr_{a}(i) \leq  \pr_{a}(j-i)$;
\item $\forall v\in \Fact(w)$ such that $|v|_{a}=i$, we have $|v|\ge  \pos_a(i)$;
\item $\forall i,j$ such that $i+j-1 \le |w|_{a}$, we have $ \pos_a(i) +  \pos_a(j) -1 \leq  \pos_a(i+j-1)$. 
\end{enumerate}
\end{proposition}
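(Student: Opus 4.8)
The plan is to prove the chain of equivalences cyclically, establishing (1) $\Rightarrow$ (2) $\Rightarrow$ (3) $\Rightarrow$ (4) $\Rightarrow$ (1), or else to prove (1) $\Leftrightarrow$ (2), (2) $\Leftrightarrow$ (3) and (3) $\Leftrightarrow$ (4) separately, whichever produces the shorter arguments; I expect the cyclic route to be cleanest. The equivalence of (1) and (2) is essentially already contained in the proof of Theorem~\ref{thm:pnf}: if $w$ is prefix normal then $F_a(w,k) = \pr_a(w,k)$, so Lemma~\ref{lemma:Fa} gives exactly the subadditivity inequality in (2); conversely, if (2) holds, then for any factor $v = w_{i+1}\cdots w_j$ we have $|v|_a = \pr_a(j) - \pr_a(i) \le \pr_a(j-i) = \pr_a(|v|)$, so the prefix of length $|v|$ is a factor with at least as many $a$'s as $v$, which says $F_a(w,k) \le \pr_a(w,k)$, and the reverse inequality is trivial since the prefix is itself a factor.

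For (2) $\Leftrightarrow$ (3): statement (3) says that among all factors with exactly $i$ occurrences of $a$, the prefix ending at the $i$'th $a$ (which has length $\pos_a(i)$) is shortest. First I would observe that (1) already gives (3) readily --- a factor $v$ with $|v|_a = i$ cannot be shorter than $\pos_a(i)$, because otherwise $F_a(w,|v|) \ge i > \pr_a(w,|v|)$ since by definition of $\pos_a$ the prefix of length $|v| < \pos_a(i)$ has fewer than $i$ occurrences of $a$. For the converse direction I would contrapose: if (2) fails, there are $i \le j$ with $\pr_a(j) - \pr_a(i) > \pr_a(j-i)$; then the factor $w_{i+1}\cdots w_j$ has length $j-i$ but contains $\pr_a(j)-\pr_a(i) =: \ell$ occurrences of $a$ with $\ell > \pr_a(j-i) \ge $ the number of $a$'s in the prefix of length $j-i$, forcing $\pos_a(\ell) > j-i$, contradicting (3) with this factor.

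The equivalence of (3) and (4) is the pure ``duality'' step: (4) is just (3) re-expressed via the position function, using the bijection between prefixes of length $k$ and their $a$-counts. Concretely, a factor $v$ with $|v|_a = i$ starting at position $p+1$ and occupying the $a$'s numbered $r+1, \ldots, r+i$ (where $r = \pr_a(w,p)$) has length at least $\pos_a(r+i) - p \ge \pos_a(r+i) - \pos_a(r+1) + 1$, and the worst case is $p = \pos_a(r+1)-1$; plugging in and reindexing with $j = i$, the new starting index $r+1$, turns the inequality $|v| \ge \pos_a(i)$ into $\pos_a(r+i) - \pos_a(r+1) + 1 \ge \pos_a(i)$, i.e.\ $\pos_a(s) + \pos_a(t) - 1 \le \pos_a(s+t-1)$ with $s = r+1$, $t = i$, which is exactly (4) after renaming. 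I expect the main obstacle to be bookkeeping here: carefully matching up ``a factor with $i$ letters $a$'' against ``the block of $a$'s it spans'' and verifying the off-by-one in the $\pos_a(i)+\pos_a(j)-1$ versus $\pos_a(i+j-1)$ form, and checking the boundary condition $i+j-1 \le |w|_a$ lines up with $j \le |w|$. None of the individual steps is deep; the care is entirely in the index arithmetic, so I would state one clean lemma relating factors to their spans of $a$-indices and reuse it for both the (3)$\Leftrightarrow$(4) and the (2)$\Leftrightarrow$(3) passages.
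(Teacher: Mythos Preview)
Your proposal is correct and follows essentially the same approach as the paper: the paper proves the strict cycle $(1)\Rightarrow(2)\Rightarrow(3)\Rightarrow(4)\Rightarrow(1)$, invoking Lemma~\ref{lemma:Fa} for $(1)\Rightarrow(2)$ and doing the remaining steps by exactly the index-arithmetic and contradiction arguments you outline (your ``minimal factor starting at the $j$th $a$ and ending at the $(i{+}j{-}1)$th $a$'' is precisely the factor the paper constructs for $(3)\Rightarrow(4)$, and your $(4)\Rightarrow(3)$ direction is the paper's $(4)\Rightarrow(1)$ argument). The only cosmetic difference is that you package some implications bidirectionally rather than going once around the cycle.
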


\begin{proof}

(1) $\Rightarrow$ (2). Follows from Lemma~\ref{lemma:Fa}, since $\pr_a(w) = \fmax_a(w)$. 

(2) $\Rightarrow$ (3). Assume otherwise. Then there exists $v\in \Fact(w)$ s.t.\ $|v| <  \pos_a(k)$, where $k=|v|_a$. Let $v = w_{i+1}\cdots w_j$, thus $j-i=k$. Then $\pr_a(j) - \pr_a(i) = k$. But $\pr_a(j-i) = \pr_a(|v|) \leq k-1 < k = \pr_a(j) - \pr_a(i)$, a contradiction.

(3) $\Rightarrow$ (4). Again assume that the claim does not hold. Then there are $i,j$ s.t.\ $ \pos_a(i+j-1) <  \pos_a(i) +  \pos_a(j) -1$. Let $k= \pos_a(j)$ and $l =  \pos_a(i+j-1)$ and define $v = w_k\cdots w_l$. Then $v$ has $i$ many $a$'s. But $|v| =  \pos_a(i+j-1) -  \pos_a(j) + 1 <  \pos_a(i) +  \pos_a(j) - 1 -  \pos_a(j) + 1 =  \pos_a(i)$, in contradiction to (3).

(4) $\Rightarrow$ (1). Let $v\in \Fact(w)$, $|v|_a = i$. We have to show that $\pr_a(|v|) \geq i$. This is equivalent to showing that $ \pos_a(i) \leq |v|$. Let $v = w_{l+1} \cdots w_r$, thus $\pr_a(r) - \pr_a(l) = i$. Let $j = \pr_a(l)+1$, thus the first $a$ in $v$ is the $j$'th $a$ of $w$. Note that we have $l<  \pos_a(j)$ and  $r\geq  \pos_a(i+j-1)$. By the assumption, we have $ \pos_a(i) \leq  \pos_a(i+j-1) -  \pos_a(j) + 1 \leq r-l = |v|$. \qed
\end{proof}

We now give some simple facts about the language $L_a$.

\begin{proposition}\label{prop:La_properties} Let $L_a$ be the language of prefix normal words. \hfill \phantom{.}
\begin{enumerate}
\item $L_{a}$ is prefix-closed, that is, any prefix of a word in $L_{a}$ is a word in $L_{a}$.
\item If $w\in L_{a}$, then any word of the form $a^{k}w$ or $wb^{k}$, $k\ge 0$, also belongs to $L_a$.
\item Let $|w|_a <3$. Then $w\in L_{a}$ iff either $w=b^n$ for some $n\ge 0$ or the first letter of $w$ is $a$.
\item Let $w\in \Sigma^{*}$. Then there exist infinitely many $v\in \Sigma^{*}$ such that $vw\in L_{a}$.
\end{enumerate}
\end{proposition}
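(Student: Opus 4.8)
The plan is to prove the four items in turn, using throughout the characterization of prefix normal words from Proposition~\ref{prop:char}(2) — namely $w\in L_a$ iff $\pr_a(j)-\pr_a(i)\le\pr_a(j-i)$ for all $0\le i\le j\le|w|$ — together with two elementary observations: the sequence $m\mapsto\pr_a(w,m)$ is $1$-Lipschitz (it increases by $0$ or $1$ at each step), and $k\mapsto\fmax_a(w,k)$ is nondecreasing (any factor of length $<|w|$ extends on one side to a longer factor without losing any $a$'s). Item~(1) is then immediate: if $u$ is a prefix of $w$, then $\pr_a(u,m)=\pr_a(w,m)$ for all $m\le|u|$, so every instance of the inequality of Proposition~\ref{prop:char}(2) for $u$ is already an instance of it for $w$; hence $u\in L_a$.

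For item~(2) I would argue via the $\fmax_a$ description. Consider first $w'=wb^k$: for $m\le|w|$ a factor of $w'$ of length $m$ is either a factor of $w$, or of the form $ub^j$ with $j\ge1$ and $u$ a (possibly empty) suffix of $w$, and in all cases its number of $a$'s is at most $\fmax_a(w,m)=\pr_a(w,m)=\pr_a(w',m)$ by monotonicity of $\fmax_a(w,\cdot)$; while for $m>|w|$ both $\fmax_a(w',m)$ and $\pr_a(w',m)$ equal $|w|_a$; hence $w'\in L_a$. Now consider $w'=a^kw$: for $m\le k$ the length-$m$ prefix of $w'$ is $a^m$, so $\fmax_a(w',m)=m=\pr_a(w',m)$ trivially; and for $m>k$ a factor $v$ of $w'$ of length $m$ is either a factor of $w$ — so $|v|_a\le\fmax_a(w,m)=\pr_a(w,m)$ — or of the form $a^jp$ with $j\ge1$ and $p\in\Pref(w)$ — so $|v|_a=j+\pr_a(w,m-j)$; in either case $|v|_a\le k+\pr_a(w,m-k)=\pr_a(w',m)$, the two required inequalities $\pr_a(w,m)-\pr_a(w,m-k)\le\pr_a(w,k)\le k$ and $\pr_a(w,m-j)-\pr_a(w,m-k)\le\pr_a(w,k-j)\le k-j$ being instances of Proposition~\ref{prop:char}(2) applied to $w$.

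For item~(3), write $n=|w|$. If $|w|_a=0$ then $w=b^n\in L_a$. For the ``only if'' direction, if $w\in L_a$ and $w\neq b^n$ then $w$ contains an $a$, so $\fmax_a(w,1)=1$, which forces $\pr_a(w,1)=1$, i.e.\ $w_1=a$ — and this half in fact needs no bound on $|w|_a$. Conversely, suppose $w_1=a$ and $|w|_a\le2$; I claim Proposition~\ref{prop:char}(2) holds. It is trivial when $i=0$ or $j=i$, and when $1\le i<j$ we have $\pr_a(i)\ge1$ and $\pr_a(j-i)\ge1$ (both $i$ and $j-i$ are $\ge1$ and $w_1=a$), so $\pr_a(j)-\pr_a(i)\le|w|_a-1\le1\le\pr_a(j-i)$. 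Hence $w\in L_a$.

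Finally, item~(4) reduces to exhibiting a single $v$ with $vw\in L_a$: by item~(2) we then obtain $a^kvw\in L_a$ for every $k\ge0$, hence infinitely many witnesses. I claim $v=a^{n}$ works, where $n=|w|$. Indeed in $a^nw$ any factor of length $m\le n$ has at most $m=\pr_a(a^nw,m)$ $a$'s; and any factor of length $m>n$ must start inside the prefix block $a^n$ (a factor lying strictly inside $w$ has length $\le n-1<m$), so it equals $a^jp$ with $0\le j\le n$ and $p\in\Pref(w)$, giving $|a^jp|_a=j+\pr_a(w,m-j)\le n+\pr_a(w,m-n)=\pr_a(a^nw,m)$ by the $1$-Lipschitz property alone. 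Thus $a^nw\in L_a$. The one step that takes a little care — and the main obstacle, such as it is — is precisely this structural description of the ``long'' factors of a word with a long run of $a$'s at the front; everything else is bookkeeping with the monotonicity and subadditivity of $\pr_a$ and $\fmax_a$.
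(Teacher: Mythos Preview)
Your proof is correct and follows essentially the same route as the paper, which simply asserts that items~1--3 ``follow easily from the definition'' and for item~4 notes that $a^{n}w\in L_a$ for every $n\ge|w|$. You have fleshed out all the details the paper omits --- using the equivalent characterization of Proposition~\ref{prop:char}(2) in place of the raw definition, and deriving the infinitely many witnesses in item~4 from the single one $a^{|w|}w$ via item~2 rather than asserting them all at once --- but the underlying ideas coincide.
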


\begin{proof}
The claims {\em 1., 2., 3.} follow easily from the definition. For {\em 4.},  note that for any $n\ge |w|$, the word $a^{n}w$ belongs to $L_{a}$. \qed
\end{proof}

We now deal with the question of how a prefix normal word can be extended to the right into another prefix normal word.

\begin{lemma}\label{lemma:test}
 Let $w\in L_{a}$. Then $wa\in L_{a}$ if and only if for every $0\le k<|w|$ the suffix of $w$ of length $k$ has less $a$'s than the prefix of $w$ of length $k+1$.
\end{lemma}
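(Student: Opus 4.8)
The statement is a characterization of when the one-letter extension $wa$ stays in $L_a$. Since appending an $a$ only adds new factors that end with that final $a$, and every new factor ending at the last position has the form (suffix of $w$) followed by $a$, the natural approach is to analyze exactly these new factors and compare their $a$-count against the relevant prefix of $wa$.

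First I would observe that $w \in L_a$ already guarantees the prefix-normality condition $F_a(wa,k) = \pr_a(wa,k)$ for all $k \le |w|$: every factor of $wa$ of length $k \le |w|$ is either a factor of $w$ (covered by $w \in L_a$) or a factor ending at position $|w|+1$, i.e.\ of the form $ua$ where $u$ is a suffix of $w$ with $|u| = k-1$; and for the prefix we use $\pr_a(wa,k) = \pr_a(w,k)$ since $k \le |w|$. So the only genuinely new constraint is at length $k = |w|+1$, but since the whole word $wa$ is its own unique factor of that length, that constraint is automatic. Hence $wa \in L_a$ if and only if, for every $k$ with $1 \le k \le |w|$, every new factor $ua$ (with $u$ a suffix of $w$, $|u| = k-1$) satisfies $|ua|_a \le \pr_a(w,k)$, i.e.\ $|u|_a + 1 \le \pr_a(w,k)$, i.e.\ $|u|_a \le \pr_a(w,k) - 1 = \pr_a(w,k-1)$ — wait, that last step is false in general, so I would instead keep it as $|u|_a + 1 \le \pr_a(w,k)$, equivalently $|u|_a < \pr_a(w,k)$ where $|u| = k-1$. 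Re-indexing with $k' = k-1$ ranging over $0 \le k' < |w|$: the suffix of $w$ of length $k'$ must have strictly fewer $a$'s than the prefix of $w$ of length $k'+1$. That is exactly the claimed condition.

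The key steps, in order: (i) note $\pr_a(wa,k) = \pr_a(w,k)$ for $k \le |w|$ and that the length-$(|w|+1)$ case is trivial; (ii) classify factors of $wa$ of length $k \le |w|$ into old factors (handled by $w \in L_a$) and new factors of the form $ua$ with $u \in \Suff(w)$; (iii) write down the inequality $F_a(wa,k) = \pr_a(w,k)$ as the requirement that all such $ua$ satisfy $|u|_a + 1 \le \pr_a(w,k)$; (iv) re-index $k \mapsto k+1$ to recover the statement's phrasing. The reverse direction is the same computation read backwards: if some suffix $u$ of length $k'$ has $|u|_a \ge \pr_a(w,k'+1)$, then $ua$ is a factor of $wa$ of length $k'+1$ with $|ua|_a = |u|_a + 1 > \pr_a(wa, k'+1)$, violating prefix normality.

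The main obstacle — really the only subtlety — is being careful with the off-by-one in the indices and with the fact that $\pr_a(w,k) - 1$ is not the same as $\pr_a(w,k-1)$, so the condition must be stated with the prefix of length $k+1$ (equivalently $\pr_a(w,k+1) \ge |u|_a + 1$ where $|u| = k$) rather than a naive shift. Everything else is bookkeeping on factors of $wa$ versus factors of $w$.
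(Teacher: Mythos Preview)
Your proposal is correct and follows essentially the same approach as the paper: classify the factors of $wa$ into ``old'' factors (already in $\Fact(w)$, handled by $w\in L_a$ together with $\pr_a(wa,k)=\pr_a(w,k)$ for $k\le |w|$) and ``new'' factors of the form $ua$ with $u\in\Suff(w)$, and observe that the stated condition is exactly the constraint $|ua|_a\le \pr_a(w,|u|+1)$ on these new factors. Your converse direction is in fact slightly more direct than the paper's: from the hypothesis $|u|_a<\pr_a(w,|u|+1)$ you immediately obtain $|ua|_a\le \pr_a(w,|u|+1)=\pr_a(wa,|u|+1)$, whereas the paper detours through a case split on whether $|u|_a<\pr_a(|u|)$ or $w_{|u|+1}=a$.
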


\begin{proof}
Suppose $wa\in L_{a}$. Fix $k$ and let $va$ be the suffix of $wa$ of length $k+1$. By definition of $L_{a}$ one has $|va|_{a}\le \pr_{a}(k+1)$, and therefore $|v|_{a}< \pr_{a}(k+1)$.

Conversely, let $v$ be the suffix of $w$ of length $k$. Since $w\in L_{a}$ one has $|v|_{a}\le \pr_{a}(k)$. We cannot have $|v|_{a}= \pr_{a}(k)$ and $w_{k+1}=b$ since by hypothesis we must have $|v|_{a}<\pr_{a}(k+1)$. Thus either $|v|_{a}< \pr_{a}(k)$ or $w_{k+1}=b$. In both cases we have then $|va|_{a}\le \pr_{a}(k+1)$. Since no suffix of $wa$ has more $a$'s than the prefix of $wa$ of the same length, and since $w\in L_{a}$, it follows that $wa\in L_{a}$. \qed
\end{proof}

We close this section by proving that $L_a$ is not context-free. Our proof is an easy modification of the proof that
Berstel and Boasson gave for the fact that the language of binary Lyndon words is not context-free \cite{BeBo97}.

\begin{theorem}\label{teor:CF}
$L_{a}$ is not context-free.
\end{theorem}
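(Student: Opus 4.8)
The plan is to mimic the Berstel--Boasson argument for Lyndon words, using the pumping lemma for context-free languages. Suppose for contradiction that $L_a$ is context-free; let $N$ be the pumping-lemma constant. The strategy is to exhibit a word $z \in L_a$ with $|z| \geq N$ such that every factorization $z = uvwxy$ with $|vwx| \leq N$ and $|vx| \geq 1$ fails to satisfy $uv^iwx^iy \in L_a$ for all $i \geq 0$. A natural candidate is a word of the form $a^p b a^q b^r$ (or a close variant) in which the prefix is exactly ``tight'': the prefix normal condition holds but only barely, so that inserting extra letters anywhere in the middle creates a factor with too many $a$'s relative to the correspondingly-long prefix, while deleting letters (the $i=0$ case) destroys the balance in the other direction. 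For concreteness I would try $z = a^{N}ba^{N}b^{N}$ or $z = a^{N+1}b^{N}a^{N}b^{N}$ and check, using Proposition~\ref{prop:char}(2) — that $\pr_a(j) - \pr_a(i) \leq \pr_a(j-i)$ for all $i \leq j$ — which insertions and deletions violate prefix normality.

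Concretely, the key steps are: (i) fix the candidate $z$ and verify $z \in L_a$ directly from the definition or from Proposition~\ref{prop:char}; (ii) take an arbitrary valid factorization $z = uvwxy$ with $|vwx| \le N$, so that $vwx$ lies within a window of length $N$ of $z$ — this localizes $vx$ to consist of some $a$'s and/or $b$'s from a bounded region; (iii) do a short case analysis on where $vwx$ sits. If $vx$ contains at least one $a$, then pumping up ($i=2$) produces a word whose prefix of some length $k$ has fewer $a$'s than a later factor of length $k$ (because we have pushed extra $a$'s to the right past a block of $b$'s), violating prefix normality — here I would point to a specific short factor, e.g. a run of $a$'s shifted rightward, and compare it against the prefix of the same length. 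If $vx$ contains only $b$'s, then pumping down ($i=0$) removes $b$'s from the interior run, which shortens the word and can be arranged to make $\fmax_a(k)$ exceed $\pr_a(k)$ for some $k$ near the boundary, again by Proposition~\ref{prop:char}(2); alternatively, if the $b$'s removed are the trailing $b^r$, one uses Proposition~\ref{prop:La_properties}(2) in reverse together with the tightness of the prefix. The point of choosing $z$ with a carefully tuned prefix is precisely that in both the ``pump up'' and ``pump down'' directions some inequality that held with equality in $z$ is now violated.

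The main obstacle I expect is choosing the word $z$ so that \emph{every} admissible $vx$ — not just the ``typical'' ones — causes a violation, including the awkward cases where $vwx$ straddles a block boundary (so $v$ is all $a$'s and $x$ is all $b$'s, or vice versa), and the degenerate cases where $|v|=0$ or $|x|=0$. In the straddling case, pumping simultaneously adds $a$'s and $b$'s; one must check that the net effect still breaks normality, which constrains the relative sizes of the blocks in $z$. This is exactly the delicate bookkeeping Berstel and Boasson had to do for Lyndon words, and adapting it requires picking the exponents in $z$ (as functions of $N$) so that the arithmetic works out uniformly. Once the word is fixed correctly, each individual case is a one-line application of the characterization in Proposition~\ref{prop:char}(2) (equivalently, of Lemma~\ref{lemma:Fa}), so the heart of the proof is the combinatorial design of $z$ and the exhaustive but routine case split on the position of $vwx$.
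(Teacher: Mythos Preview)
Your outline has a real gap, and it is precisely in the case you flag as ``awkward'' but then wave away. Both of your candidate words $a^{N}ba^{N}b^{N}$ and $a^{N+1}b^{N}a^{N}b^{N}$ end in a $b$-run of length $N$. With the ordinary pumping lemma, the window $vwx$ has length at most $N$, so it may sit entirely inside that trailing $b^{N}$. In that case $v$ and $x$ consist only of trailing $b$'s, and then \emph{every} pumped word $uv^{i}wx^{i}y$ stays in $L_a$: for $i\ge 1$ this is Proposition~\ref{prop:La_properties}(2) ($w\in L_a \Rightarrow wb^{k}\in L_a$), and for $i=0$ it is Proposition~\ref{prop:La_properties}(1) ($L_a$ is prefix-closed). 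Your suggestion to use ``Proposition~\ref{prop:La_properties}(2) in reverse together with the tightness of the prefix'' is exactly backwards: that proposition tells you adding or removing trailing $b$'s \emph{preserves} membership in $L_a$, so no choice of $i$ can produce a contradiction from this window. The same obstruction defeats any candidate that has a run of length $\ge N$ at either end (a leading $a$-run is equally harmless, again by Proposition~\ref{prop:La_properties}(2)); yet a prefix-normal word with many $a$'s must start with a long $a$-run, so designing a word with no ``safe'' window of length $N$ is genuinely delicate.

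This is why the paper does not use the bare pumping lemma at all: it invokes \emph{Ogden's} lemma. Taking $w=a^{N+1}ba^{N}ba^{N+1}$ and distinguishing the central block of $N$ $a$'s forces one of the pumped pieces to lie entirely inside that middle run, which is exactly the region where pumping changes the length of a non-initial $a$-run and hence breaks prefix normality. The case analysis then becomes short and clean (pump up if the middle run grows; pump down if the factorization also touches the first $a$-run). Your Berstel--Boasson instinct was right, but the adaptation requires Ogden, not Bar--Hillel; without distinguished positions you cannot keep the window away from the harmless trailing (or leading) block.
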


\begin{proof}
Recall that Ogden's iteration lemma (see e.g.~\cite{BerstelB90}) states that, for every context-free language $L$ there exists an integer $N$ such that, for any word $w\in L$ and for any choice of at least $N$ distinguished positions in $w$, there exists a factorization $w=xuyvz$ such that 

\begin{enumerate}
 \item either $x,u,y$ each contain at least one distinguished position, or $y,v,z$ each contain at least one distinguished position;
 \item the word $uyv$ contains at most $N$ distinguished positions;
 \item for any $n\ge 0$, the word $xu^{n}yv^{n}z$ is in $L$.
 \end{enumerate}
 
Now, assume that the language $L_{a}$ is context-free, and consider the word $w=a^{N+1}ba^{N}ba^{N+1}$ where $N$ is the constant of  Ogden's Lemma. It is easy to see that $w\in L_{a}$. Distinguish the central run of $N$ letters $a$. We claim that for every factorization $w=xuyvz$ of $w$, pumping $u$ and $v$ eventually results in a word in which the first run of $a$'s is not the longest one. Such a word cannot belong to $L_{a}$.
 
If $x,u,y$ each contain at least one distinguished position, then $u$ is non-empty and it is contained in the central run of $a$'s. Now observe that every word obtained by pumping $u$ and $v$ is prefixed by $a^{N+1}b$. Pumping $u$ and $v$, we then get a word $a^{N+1}ba^ms$, for some word $s$, where $m>N+1$. This word is not in $L_a$.
 
Suppose now that $y,v,z$ each contain at least one distinguished position. Then $v$ is non-empty and it is contained in the central run of  $a$'s. 

If $u$ is contained in the first run of $a$'s and it is non-empty, then, pumping \textit{down}, one gets a word of the form  $a^{k}ba^{m}ba^{N+1}$ with $k\le N$. This word is not in $L_{a}$. In all other cases ($u$ is contained in the second run of $a$'s and it is non-empty, or $u=\epsilon$, or $u$ contains the first $b$ of $w$), every word obtained by pumping $u$ and $v$ is prefixed by $a^{N+1}b$. Again, pumping $u$ and $v$, we obtain a word in which the first run of $a$'s is not the longest one.  \qed
\end{proof}

\section{Prefix Normal Words vs.\ Lyndon Words}\label{sec:preneck} 

In this section, we explore the relationship between the language $L_{a}$ of prefix normal words and some known classes of words defined by means of lexicographic properties.

A \emph{Lyndon word} is a word which is lexicographically (strictly) smaller than any of its proper non-empty suffixes. Equivalently, $w$ is a Lyndon word if it is the (strictly) smallest, in the lexicographic order, among its conjugates, i.e.,\ for any factorization $w=uv$, with $u,v$ non-empty words, one has that the word $vu$ is lexicographically greater than $w$ \cite{LothaireAlg}. Note that, by definition, a Lyndon word is primitive, i.e.,\ it cannot be written as $w=u^{k}$ for a $u\in \Sigma^{*}$ and $k>1$. Let us denote by $Lyn$ the set of Lyndon words over $\Sigma$. One has that $Lyn\not\subseteq  L_a$ and $L_a \not\subseteq Lyn$. For example, the word $w=abab$ belongs to $L_a$ but is not a Lyndon word since it is not primitive. An example of Lyndon word which is not in prefix normal form is $w=aabbabaabbb$.

A power of a Lyndon word is called a \emph{prime word} \cite{Knuth42} or \emph{necklace} (see \cite{BePe07} for more details and references on this definition). 

Let us denote by $ \PL$ the set of prefixes of powers of Lyndon words, also called sesquipowers (or fractional powers) of Lyndon words \cite{ChHaPe04}, or \emph{preprime words} \cite{Knuth42}, or also \emph{pre-necklaces}\cite{Ruskey92}. It is easy to see that $ \PL$ is in fact the set of prefixes of Lyndon words plus the powers of the letter $b$.

The next proposition shows that any prefix normal word different form a power of the letter $b$ is a prefix of a Lyndon word.

\begin{proposition}\label{prop:preflin}
Let $w\in L_{a}$ with $|w|_{a}>0$. Then the word $wb^{|w|}$ is a Lyndon word.
\end{proposition}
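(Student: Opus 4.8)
The plan is to show that $wb^{|w|}$ is strictly lexicographically smaller than each of its proper nonempty suffixes, which is one of the standard characterizations of a Lyndon word. Write $n = |w|$ and $u = wb^n$, so $|u| = 2n$. A proper nonempty suffix $s$ of $u$ has one of two forms: either $s = b^k$ for some $1 \le k \le n$, or $s = vb^n$ where $v$ is a proper nonempty suffix of $w$ of length $\ell$ with $1 \le \ell < n$. Comparing $u$ with $b^k$ is trivial: since $|w|_a > 0$, the word $w$ (hence $u$) begins with at least one $a$, so $u$ starts with $a < b$, and $u < b^k$.

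For the second case, let $v$ be the proper nonempty suffix of $w$ of length $\ell$, so $s = vb^n$ and $|s| = \ell + n < 2n = |u|$. The key is to use the prefix normality of $w$ via Proposition~\ref{prop:char}(3), or more directly the inequality $\fmax_a(w,k) = \pr_a(w,k)$. Since $w \in L_a$, for every length $k$ the prefix of $w$ of length $k$ has at least as many $a$'s as the suffix of $w$ of length $k$; in particular, comparing the length-$\ell$ prefix $p$ of $w$ with $v$, we have $|p|_a \ge |v|_a$. If $|p|_a > |v|_a$, then reading $u$ and $s$ left to right, within the first $\ell$ positions $s$ must have encountered a $b$ where $u$ has an $a$ at an earlier or equal position — more carefully, the first position where the running $a$-count of $p$ exceeds that of $v$ is a position where $p$ has $a$ and $v$ has $b$, and up to that point the two agree, so $u < s$. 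If instead $|p|_a = |v|_a$, then I claim $p = v$: two binary words of the same length with the same number of $a$'s and with the prefix-sums of one dominating the other forces equality of the prefix-sum sequences, hence equality of the words. In that case $u = wb^n = p\,(w_{\ell+1}\cdots w_n)\,b^n$ and $s = v\,b^n = p\,b^n$, so after the common prefix $p$ the next symbol of $u$ is $w_{\ell+1}$ and the next symbol of $s$ is $b$; since $w_{\ell+1} \le b$, either $w_{\ell+1} = a$ and $u < s$, or $w_{\ell+1} = b$ and we recurse on the shorter suffix (formally: $s$ is then a prefix of $u$, and since $|s| < |u|$ and $u$ contains an $a$ beyond position $|s|$... actually more simply, $u$ properly extends $s$, hence $u > s$ is false — we need $u < s$).

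This last subcase needs a little care: if $p = v$ and the tail $w_{\ell+1}\cdots w_n = b^{n-\ell}$, then $s = p b^n = w_1\cdots w_\ell b^n$ and $u = w_1 \cdots w_\ell b^{n-\ell} b^n = w_1\cdots w_\ell b^{2n-\ell}$, so $s$ is a proper prefix of $u$, giving $s < u$, which is the wrong direction. So I must rule this out: if $p = v$ (the length-$\ell$ prefix equals the length-$\ell$ suffix) and moreover positions $\ell+1,\dots,n$ of $w$ are all $b$, then since $w$ starts with $a$ (as $|w|_a>0$ and $w\in L_a$ forces $w_1 = a$), $p$ starts with $a$, so the suffix $v = p$ starts with $a$ too, i.e. $w_{n-\ell+1} = a$; but then the suffix of $w$ of length $n - \ell$, namely $w_{\ell+1}\cdots w_n = b^{n-\ell}$, has zero $a$'s while... hmm, that's consistent. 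The clean fix: note $w_1 = a$ and $v = p$ means $w_{n-\ell+1} = a$, so $\pos_a$-considerations or simply: the suffix of $w$ of length $\ell$ contains $a$, so by a direct length-$\ell$ prefix-normality comparison combined with the tail being all $b$'s we get a contradiction with $\fmax_a(w,\ell+1) = \pr_a(w,\ell+1)$ — the factor $w_{n-\ell+1}\cdots w_n b$ would be... this requires the trailing $b$ which isn't in $w$. I think the correct observation is simpler: if $w_{\ell+1}\cdots w_n = b^{n-\ell}$ with $0 \le \ell < n$, then $w = w_1\cdots w_\ell b^{n-\ell}$, and for $p = v$ we'd need the length-$\ell$ suffix of $w$ to equal its length-$\ell$ prefix; but the length-$\ell$ suffix of $w = w_1\cdots w_\ell b^{n-\ell}$ is $w_{n-\ell+1}\cdots w_\ell b^{n-\ell}$ if $\ell \ge n-\ell$, which ends in $b^{n-\ell}$, so $p$ ends in $b^{n-\ell}$, meaning $w$ ends in $b^{2(n-\ell)}$, and iterating, $w$ must end in $b^n$, i.e. $w = b^n$, contradicting $|w|_a > 0$. (If $\ell < n - \ell$ then $v = b^\ell$ and $p = b^\ell$ means $w$ starts with $b$, again contradicting $w_1 = a$.)

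I expect the main obstacle to be exactly this bookkeeping around the borderline case where a suffix of $u$ is a prefix of $u$; the core idea — prefix-dominance of $a$-counts from $w \in L_a$ immediately gives $u \le s$ with equality excluded — is straightforward, but making the "equality excluded" part airtight requires the reversal trick (using that $\PNF_a(w) = \PNF_a(\tilde w)$, noted just after Theorem~\ref{thm:pnf}) or a careful run-length argument to show $w \ne b^n$ propagates correctly. I would organize the final writeup as: (i) reduce to the two forms of proper suffix; (ii) dispatch $s = b^k$ using $w_1 = a$; (iii) for $s = v b^n$, use prefix-normality to get $|p|_a \ge |v|_a$ and split on strict versus equality; (iv) in the equality case show $p = v$ and then rule out the bad sub-case via the run-length/reversal argument, concluding $u < s$ in all cases, so $u$ is Lyndon.
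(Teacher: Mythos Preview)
Your overall strategy is sound and can be made to work, but it takes a substantially longer route than the paper, and there is one genuine logical slip. The claim ``if $|p|_a = |v|_a$ then $p = v$'' is false: take $p = ab$ and $v = ba$, where the prefix $a$-counts of $p$ dominate those of $v$ pointwise, the totals agree, yet $p \ne v$. What \emph{is} true (and what you implicitly rely on in the strict case) is the stronger consequence of prefix normality: since every prefix of $v$ is a factor of $w$, we have $\pr_a(p,k) \ge \pr_a(v,k)$ for \emph{all} $k \le \ell$, not just $k = \ell$. From this you get directly that either $p = v$ or the first position of disagreement has $p_k = a$, $v_k = b$, hence $p < v$. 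So the correct dichotomy is $p < v$ versus $p = v$, not $|p|_a > |v|_a$ versus $|p|_a = |v|_a$; once you fix this, your periodicity argument for the $p = v$ subcase goes through.

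The paper's argument avoids all of this machinery. It argues by contradiction: suppose some proper nonempty suffix $s$ of $wb^{|w|}$ satisfies $s < wb^{|w|}$, and let $u$ be their longest common prefix. Then $ua$ is a prefix of $s$ and $ub$ is a prefix of $wb^{|w|}$. The key observation is that $ua$, ending in $a$, must lie entirely inside $w$ (it cannot reach into the $b^{|w|}$ tail), so $|ua| \le |w|$ and hence $ub$ is a prefix of $w$ itself. Now $ua$ is a factor of $w$ with strictly more $a$'s than the prefix $ub$ of the same length, contradicting $w \in L_a$. No case split on the shape of the suffix, no running-count dominance, no periodicity.

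What your approach does buy is an explicit treatment of the border case where a proper suffix of $wb^{|w|}$ is also a prefix of it (so the ``different letters after $u$'' step would fail). Your periodicity argument shows this forces $w = b^{|w|}$, contradicting $|w|_a > 0$. The paper's proof silently assumes this case does not arise; your analysis makes that explicit, at the cost of a much longer argument.
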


\begin{proof}
We have to prove that any non-empty suffix of $wb^{|w|}$ is greater than $wb^{|w|}$. Suppose by contradiction that there exists a non-empty suffix $v$ of $wb^{|w|}$ that is smaller than $wb^{|w|}$, and let $u$ be the longest common prefix between $v$ and $wb^{|w|}$. This implies that $u$ is followed by different letters when it appears as prefix of $v$ and as prefix of $wb^{|w|}$. Since we supposed that $v$ is smaller than $wb^{|w|}$, we conclude that $ub$ is prefix of $wb^{|w|}$ and $ua$ is prefix of $v$. Since $ua$ is a factor of $wb^{|w|}$ ending with $a$, $ua$ must be a factor of $w$ and therefore $ub$ is a prefix of $w$. Thus the factor $ua$ of $w$ has one more $a$ than the prefix $ub$ of $w$, contradicting the fact that $w$ is a prefix normal word. \qed
\end{proof}

We can now state the following result.

\begin{theorem}\label{theor:PL}
Every prefix normal word is a pre-necklace. That is, $L_{a}\subset  \PL$.
\end{theorem}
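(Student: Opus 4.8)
The plan is to derive Theorem~\ref{theor:PL} almost immediately from Proposition~\ref{prop:preflin} together with the structural remarks already made about $\PL$. First I would split into the two cases that the definition of prefix normality naturally suggests. If $w\in L_a$ and $|w|_a = 0$, then $w = b^n$ for some $n\ge 0$, and the excerpt explicitly notes that $\PL$ contains all powers of the letter $b$; hence $w\in\PL$ in this case. Otherwise $|w|_a > 0$, and Proposition~\ref{prop:preflin} tells us that $wb^{|w|}$ is a Lyndon word. Since $w$ is a prefix of $wb^{|w|}$, and $\PL$ is by definition the set of prefixes of powers of Lyndon words (a Lyndon word being its own first power), we conclude $w\in\PL$. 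This establishes the inclusion $L_a\subseteq\PL$.

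To get the \emph{strict} inclusion $L_a\subsetneq\PL$, I would exhibit a single pre-necklace that is not prefix normal. A clean choice is a prefix of a Lyndon word that fails prefix normality: for instance $aabb$ is a prefix of the Lyndon word $aabbab$ (or of $aabbb$, which is $b$ raised to no power times... better to just pick a genuine Lyndon word), yet $aabb$ is not prefix normal because its suffix $abb$... no — I would instead reuse the example already given in Section~\ref{sec:preneck}, namely the Lyndon word $w=aabbabaabbb$, which is explicitly stated there to be a Lyndon word not in prefix normal form. A Lyndon word is trivially a pre-necklace (it is the first power of itself), so $aabbabaabbb\in\PL\setminus L_a$, giving strictness. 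Alternatively, the shorter prefix $aabb$ of this word is a pre-necklace not in $L_a$, since the factor $abba$... one just needs any explicit witness; I would state one concretely and verify it in a sentence.

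The argument has essentially no hard part: the real work was done in Proposition~\ref{prop:preflin}, and what remains is bookkeeping about the definition of $\PL$. The only point requiring a moment's care is the degenerate case $w=b^n$, which is precisely why the excerpt took the trouble to observe that ``$\PL$ is the set of prefixes of Lyndon words plus the powers of the letter $b$'' — without that remark the word $b^n$ (for $n\ge 2$, which is not primitive and hence not a prefix of any Lyndon word, since all sufficiently long prefixes of Lyndon words over $\{a,b\}$ contain an $a$) would be a spurious counterexample to the inclusion. So the proof I would write is: handle $|w|_a=0$ via the $b^n$ remark; handle $|w|_a>0$ via Proposition~\ref{prop:preflin} and the fact that a prefix of a Lyndon word lies in $\PL$; then cite $aabbabaabbb$ (or a prefix thereof) as a pre-necklace outside $L_a$ to conclude the inclusion is proper.
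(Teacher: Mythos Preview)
Your argument is correct and follows exactly the paper's proof: split on whether $|w|_a=0$, handle $b^n$ directly, and invoke Proposition~\ref{prop:preflin} otherwise; for strictness the paper cites (just after the theorem) the witness $aabbabaa$, though your choice $aabbabaabbb$ works equally well. One caution: your tentative alternative witness $aabb$ is \emph{not} a valid example---$aabb$ is prefix normal (it even appears in Table~\ref{table:classes4}), and $abba$ is not a factor of it---so drop that aside and simply use $aabbabaabbb$ or $aabbabaa$.
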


\begin{proof}
 If $w$ is of the form $b^{n}$, $n\ge 1$, then $w$ is a power of the Lyndon word $b$. Otherwise, $w$ contains at least one $a$ and the claim follows by Proposition \ref{prop:preflin}. \qed
\end{proof}

The languages $L_{a}$  and $ \PL$, however, do not coincide. The shortest word in $ \PL$ that does not belong to $L_{a}$ is $w=aabbabaa$. 
Below we give the table of the number of words in $L_{a}$ of each length, up to 16, compared with that of pre-necklaces. This latter sequence is listed in Neil Sloane's On-Line Encyclopedia of Integer Sequences~\cite{sloane}.

\begin{table}[ht]
\centering  
\begin{small}
\begin{raggedright}

\begin{tabular}{c *{30}{@{\hspace{2.1mm}}l}}
 $n$    & 1\hspace{1ex} & 2\hspace{1ex} & 3\hspace{1ex} & 4\hspace{1ex} & 5\hspace{1ex} & 6\hspace{1ex} & 7\hspace{1ex} &
8\hspace{1ex} & 9\hspace{1ex} & 10 & 11 & 12 & 13 & 14 & 15 & 16 \\
\hline \rule[-6pt]{0pt}{22pt}
$L_{a}\cap \Sigma^{n}$ & 2 & 3 & 5 & 8 & 14 & 23 & 41 & 70 & 125 & 218 & 395 & 697 & 1273 & 2279 & 4185 & 7568 \\
$ \PL\cap \Sigma^{n}$ & 2 & 3 & 5 & 8 & 14 & 23 & 41 & 71 & 127 & 226 & 412 & 747 & 1377 & 2538 & 4720 & 8800  \\
\hline \rule[-6pt]{0pt}{22pt}
\end{tabular}
\end{raggedright}\caption{\label{table:cardofL}The number of words in $L_{a}$ and in $\PL$ for each length up to 16.}
\end{small}
\end{table}

\section{The Prefix Normal Equivalence}\label{sec:equiv}

The prefix normal form $\PNF_{a}$  induces an equivalence relation on $\Sigma^*$, namely $u \equiv_{\PNF_{a}} v$ if and only if $\PNF_{a}(u) = \PNF_{a}(v)$. In Table \ref{table:classes4}, we give all prefix normal words of length $4$, and their equivalence classes.

\begin{table}[ht]
\centering \begin{small}
\begin{raggedright}

\begin{tabular}{l *{3}{@{\hspace{6mm}}c}}
$\PNF_{a}$   & class  & card.\\
\hline \rule[-2pt]{0pt}{3pt}\\
$aaaa$ & \{$aaaa$\} & 1\\
$aaab$ & \{$aaab$, $baaa$\}& 2\\
$aaba$ & \{$aaba$, $abaa$\}& 2\\
$aabb$ & \{$aabb$, $baab$, $bbaa$\}& 3\\
$abab$ & \{$abab$, $baba$\}& 2\\
$abba$ & \{$abba$\}& 1\\
$abbb$ & \{$abbb$, $babb$, $bbab$, $bbba$\}& 4\\
$bbbb$ & \{$bbbb$\}& 1\\
\hline \vspace{4mm}
\end{tabular}
\end{raggedright}\caption{The classes of words of length $4$ having the same prefix normal form.\label{table:classes4}}
\end{small}
\end{table}

An interesting question is how to characterize two words that have the same prefix normal form. 
The classes of this equivalence do not seem to follow regular patterns. For example, the words $aabababa$, $aabbaaba$, $abaababa$, $abaabbaa$, $ababaaba$, $abababaa$ all have the same prefix normal form $aabababa$, so that no simple statement about the lengths of the runs of the two letters seems to provide a characterization of the classes. This example also shows that the prefix normal form of a word $w$ is in general more complicated that just a rotation of $w$ or of its reversal (which is in fact the case for small lengths). 

Recall that  for word length $n\leq 16$, we listed the number of equivalence classes in Table~\ref{table:cardofL}. 
The sizes of the equivalence classes seem to exhibit an irregular behaviour. We report in Table~\ref{table:classes8}, for each of the $70$ equivalence classes for words of length $8$, the prefix normal form and the number of words in the class. Furthermore, we report the cardinality of the largest class of words for each length up to $16$ (Table~\ref{table:classes}). 

\begin{table}[ht]
\centering \begin{small}
\begin{raggedright}

\begin{tabular}{l *{1}{@{\hspace{2mm}}c@{\hspace{6mm}}} l *{1}{@{\hspace{2mm}}c@{\hspace{6mm}}} l *{1}{@{\hspace{2mm}}c@{\hspace{6mm}}} l *{1}{@{\hspace{2mm}}c}}
$\PNF_{a}$  & card. & $\PNF_{a}$  & card. & $\PNF_{a}$  & card. & $\PNF_{a}$  & card.\\
\hline \rule[-2pt]{0pt}{3pt}\\

$aaaaaaaa$ & 1 & $aaabaabb$ & 6 & $aabababa$ & 6 & $abababba$ & 2 \\
$aaaaaaab$ & 2 & $aaababaa$ & 2 & $aabababb$ & 9 & $abababbb$ & 4 \\
$aaaaaaba$ & 2 & $aaababab$ & 6 & $aababbaa$ & 2 & $ababbaba$ & 1 \\
$aaaaaabb$ & 3 & $aaababba$ & 4 & $aababbab$ & 8 & $ababbabb$ & 6 \\
$aaaaabaa$ & 2 & $aaababbb$ & 8 & $aababbba$ & 4 & $ababbbab$ & 4 \\
$aaaaabab$ & 4 & $aaabbaaa$ & 1 & $aababbbb$ & 10 & $ababbbba$ & 2 \\
$aaaaabba$ & 2 & $aaabbaab$ & 4 & $aabbaabb$ & 3 & $ababbbbb$ & 6 \\
$aaaaabbb$ & 4 & $aaabbaba$ & 2 & $aabbabab$ & 4 & $abbabbab$ & 2 \\
$aaaabaaa$ & 2 & $aaabbabb$ & 6 & $aabbabba$ & 3 & $abbabbba$ & 2 \\
$aaaabaab$ & 4 & $aaabbbaa$ & 2 & $aabbabbb$ & 8 & $abbabbbb$ & 5 \\
$aaaababa$ & 3 & $aaabbbab$ & 4 & $aabbbaab$ & 2 & $abbbabbb$ & 4 \\
$aaaababb$ & 6 & $aaabbbba$ & 2 & $aabbbaba$ & 2 & $abbbbabb$ & 3 \\
$aaaabbaa$ & 2 & $aaabbbbb$ & 6 & $aabbbabb$ & 6 & $abbbbbab$ & 2 \\
$aaaabbab$ & 4 & $aabaabaa$ & 1 & $aabbbbaa$ & 1 & $abbbbbba$ & 1 \\
$aaaabbba$ & 2 & $aabaabab$ & 4 & $aabbbbab$ & 4 & $abbbbbbb$ & 8 \\
$aaaabbbb$ & 5 & $aabaabba$ & 2 & $aabbbbba$ & 2 & $bbbbbbbb$ & 1 \\
$aaabaaab$ & 2 & $aabaabbb$ & 4 & $aabbbbbb$ & 7 & \\
$aaabaaba$ & 4 & $aababaab$ & 2 & $abababab$ & 2 & \\

\hline \vspace{4mm}
\end{tabular}
\end{raggedright}\caption{The cardinalities of the $70$ classes of words of length $8$ having the same prefix normal form. There are $7$ classes of length $1$, $24$ classes of length $2$, $5$ classes of length $3$, $16$ classes of length $4$, $2$ classes of length $5$, $9$ classes of length $6$, $1$ class of length $7$, $4$ classes of length $8$, $1$ class of length $9$ and $1$ class of length $10$.\label{table:classes8}}
\end{small}
\end{table}

\begin{table}[ht]
\centering  
\begin{small}
\begin{raggedright}
\begin{tabular}{c *{30}{@{\hspace{3.0mm}}l}}
 $n$    & 1\hspace{1ex} & 2\hspace{1ex} & 3\hspace{1ex} & 4\hspace{1ex} & 5\hspace{1ex} & 6\hspace{1ex} & 7\hspace{1ex} &
8\hspace{1ex} & 9\hspace{1ex} & 10 & 11 & 12 & 13 & 14 & 15 & 16 \\
\hline \rule[-6pt]{0pt}{22pt}
$\max |[w]|$ & 1 & 2 & 3 & 4 & 5 & 6 & 8 & 10 & 12 & 18 & 24 & 30 & 40 & 60 & 80 & 111 \\
\hline \rule[-6pt]{0pt}{22pt}
\end{tabular}
\end{raggedright}\caption{The maximum cardinality of a class of words having the same prefix normal form.\label{table:classes}}
\end{small}
\end{table}

\section{Conclusion and Open Problems}\label{sec:concl}

In this paper, we introduced the prefix normal form of a binary word. This construction arises in the context of indexing for jumbled pattern matching and provides a characterization of the set of factor Parikh vectors of a binary word. We then investigated the language $L_{a}$ of words which are in prefix normal form (w.r.t.\ the letter $a$).

Many open problems remain. Among these, the questions regarding the prefix normal equivalence were explored in Section~\ref{sec:equiv}: How can we characterize two words that have the same prefix normal form? Can we say anything about the number or the size of equivalence classes for a given word length?

Although we showed that the language $L_{a}$ is strictly contained in the language of pre-necklaces (prefixes of powers of Lyndon words), we were not able to find a formula for enumerating prefix normal words. A possible direction for attacking this problem would be finding a characterization of those pre-necklaces which are not prefix normal. Indeed, an enumerative formula for the pre-necklaces is known \cite{sloane}. 

Another open problem is to find an algorithm for testing whether a word is in prefix normal form. The best offline algorithms at the moment are the ones for computing an index for the jumbled pattern matching problem, and thus the prefix normal form~\cite{BuCiFiLi10a,MoRa10,MoRaJDA}; these have running time $O(n^2/\log n)$ for a word of length $n$, or  $O(n^2/\log^2 n)$ in the word-RAM model. However, testing may be easier than actually computing the PNF. Note also that Lemma~\ref{lemma:test} gives us an online testing algorithm, with time complexity $O(n^{2})$. Another, similar, online testing algorithm is provided by condition {\em 4.} of Proposition~\ref{prop:char}, with running time $O(|w|_a^2)$, which is, of course, again $O(n^2)$ in general.

\subsubsection*{Acknowledgements.}

We would like to thank Bill Smyth for interesting discussions on prefix normal words. We are also grateful to an anonymous referee who helped us improve the proof of Theorem 3.

\bibliographystyle{abbrv}
\bibliography{pnf}
\end{document}